\begin{document}
\title{Connection Between Kinetic Energy And Vorticity Blow-up in 3-D Navier-Stokes Fluid.}
\author{Manuel Garc\'ia-Casado% etc
% \thanks is optional - remove next line if not needed
%\thanks{\emph{Present address:} Insert the address here if needed}%
}                     % Do not remove
\institute{Instituto de Cat\'alisis y Petroleoqu\'imica, CSIC, Marie Curie, 2, Madrid E-28049, Spain.\\
\email{mgarciacasado@icp.csic.es}\\
Telephone: +34 91 585 48 81\\
Fax: +34 91 585 47 79
}

\date{}%Received: date / Accepted: date}
% The correct dates will be entered by Springer
%
% Add name of the expert who has communicated your paper
\communicated{}%name}
\maketitle
\begin{abstract}
In this paper the author formulates and proves a theorem that relates smoothness of kinetic energy
and smoothness of vorticity in a 3-D Navier-Stokes fluid. Setting velocity and vorticity boundary conditions, a direct relation
arises between kinetic energy and the squared Euclidean norm of the vorticity. As a direct result, their smoothness is closely
related. 
\end{abstract}
\section{Introduction}
\label{intro}
It's well known that Navier-Stokes equations give us the behaviour of fluids \cite{RefB1}-\cite{RefB2}. The most general equations of
a viscous compressible fluid is given by the nonlinear partial differential equations
\begin{equation}
  \partial_{t} \vec{u} + \vec{u} \cdot \vec{\nabla} \vec{u} = \nu \Delta 	   \vec{u}-\frac{1}{\rho}\vec{\nabla}p+\left(\frac{\zeta}{\rho}+\frac{1}{3}\nu\right)\vec{\nabla}\left(\vec{\nabla}\cdot\vec{u}\right)+ \frac{1}{\rho}\vec{f},
\label{eq compressible NS}
\end{equation}
where $\rho$ is the density of the fluid, $\nu>0$ is the kinematic viscosity, $\zeta$ is a second viscosity coefficient and $\vec{f}$ is an external force. In equation (\ref{eq compressible NS}), the velocity $\vec{u}\left(\vec{x},t\right)$ of a fluid particle is a vectorial function $\vec{u}:\Omega\times\left[0,\infty\right)\longmapsto\bf {R}^{3}$, and pressure $p\left(\vec{x},t\right)$ is a scalar function $p:\Omega\times\left[0,\infty\right)\longmapsto\bf{R}$, where $\vec{x}\in\Omega\subset\bf {R}^{3}$ and $t\in\left[0,\infty \right)$. The fluid is 
incompressible if $\rho$ is constant and, as a mass conservation law consequence, the flow is free-divergence, i.e. $\vec{\nabla}\cdot\vec{u}=0$. Then, using Einstein notation, the Navier-Stokes equations (\ref{eq compressible NS}) for each velocity component are 
\begin{eqnarray}
 \partial_{t}u_i + u_j \partial_j u_i = \nu \partial_j\partial_j u_i -\partial_i p,
 \label{eq incompressible NS}\\
 \partial_{i}u_i=0,
 \label{eq free-divergence}
\end{eqnarray}
where we have taken $\rho=1$ and $f_i=0$ for simplicity. An acceptable solution for (\ref{eq incompressible NS})-(\ref{eq free-divergence}) are that functions $u_i,p\in\textsl{C}^{\infty}\left(\Omega\times\left[0,\infty\right)\right)$ such that have finite energy \cite{RefB3}, i.e.
\begin{equation}
 \int_{\Omega}u_i u_i\, d^3 x < \infty,\, \forall t\in\left[0,\infty\right).
\label{eq bounded energy}
\end{equation}
A variational method was developed by J. Leray  \textit{et al.} \cite{RefJ1} to solve (\ref{eq incompressible NS})-(\ref{eq free-divergence}). This method called \textit{weak solutions method} uses a free divergence test function that is multiplied by (\ref{eq incompressible NS}) with dot product. Then, an integration in $\textbf{x}\in\Omega$  and $t\in\left[0,T\right)$ is made. A theorem proved by J. Serrin (see \cite{RefJ2}) guarantees the existence of this smooth weak solutions in $t\in\left[0,T\right)$. In 1984, Beale, Kato and Majda (see \cite{RefJ3}) developed a theorem for 3-D Euler fluid (i.e. a fluid with $\nu=0$), which equation (\ref{eq incompressible NS}) has weak solutions in $t\in\left[0,T\right)$. This theorem states that if a initial smooth solution for Euler equation loses it smoothness some time later, then the maximum vorticity necesarily grows without bound as the critical time approaches. As immediate corollary of this theorem, the existence of a smooth solution in $t\in\left[0,T\right)$ is guaranteed. In 2007, Neustupa and Penel \cite{RefJ4}-\cite{RefJ5} prove that, taking boundary conditions
\begin{equation}
\vec{u}\cdot\hat{\vec{n}}=0,\,\,\vec{\nabla}\times\vec{u}\cdot\vec{\hat{n}}=0,\,\,\left[\vec{\nabla}\times\left(\vec{\nabla}\times\vec{u}\right)\right]\cdot\hat{\vec{n}}=0\,\,\,\textnormal{on}\,\, \partial\Omega,
\label{eq vorticity b-c}
\end{equation}
where $\hat{\vec{n}}$ is normal to surface, the velocity and the curl of vorticity are related   
\begin{equation}
      \int_{\Omega}\left|\vec{u}\right|^{2}\,d^3x<\infty \Leftrightarrow \left|\int^{t}_{0}\int_{\Omega}\left|\vec{\nabla}\times\left(\vec{\nabla}\times\vec{u}\right)\right|^2\,d^3 x\,d\tau\right|<\infty,
      \end{equation}
where $\Omega\subset\bf {R}^{3}$ and $t\in\left[0,\infty \right)$. The problem here, is that (\ref{eq vorticity b-c}) imposes a restriction on the space second derivative of the velocity, while (\ref{eq incompressible NS}) is second order in space.    

\section{Blow-up in 3-D Navier-Stokes Fluid.}
\label{sec:1}
In this section, we will prove a similar theorem to Beale-Kato-Majda theorem, but in this case there is a 3-D Navier-Stokes fluid instead of an Euler Fluid. In our case, we will prove that, taking suitable boundary conditions for vorticity and velocity, vorticity blows-up if, and only if, kinetic energy of a fluid particle blows-up $\forall t \in\left[0,\infty\right)$.
\begin{theorem}
Let be $u_i,p\in\textsl{C}^{\infty}\left(\Omega\times\left[0,\infty\right)\right)$ such that (\ref{eq incompressible NS})-(\ref{eq free-divergence}) are held. And let be the boundary conditions
\begin{equation}
\vec{u}\cdot\vec{\hat{n}}=0,\,\,\left(\vec{\nabla}\times\vec{u}\right)\cdot\left(\vec{u}\times\hat{\vec{n}}\right)=0,
\label{eq theorem}
\end{equation}
on $\partial\Omega$, where $\hat{\vec{n}}$ is normal to surface. Then,  
    \begin{equation}
      \int_{\Omega}\vec{u}\cdot\vec{u}\,d^3x<\infty \Leftrightarrow \left|\int^{t}_{0}\int_{\Omega}\left(\vec{\nabla}\times\vec{u}\right)\cdot\left(\vec{\nabla}\times\vec{u}\right)\,d^3 x\,d\tau\right|<\infty,
    \label{eq theorem 2} 
    \end{equation}
where $\Omega\subset\bf {R}^{3}$ and $t\in\left[0,\infty \right)$.      
\end{theorem}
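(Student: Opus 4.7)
The plan is to derive the standard energy identity under the given boundary conditions, and then read the equivalence off that identity. Concretely, I would dot equation~(\ref{eq incompressible NS}) with $\vec{u}$, integrate over $\Omega$, and dispatch the three resulting terms (nonlinear transport, pressure, viscous) in turn.

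First, the transport term becomes
$\int_\Omega u_i u_j \partial_j u_i\,d^3x = \tfrac{1}{2}\int_\Omega u_j \partial_j(u_i u_i)\,d^3x$,
and an integration by parts together with (\ref{eq free-divergence}) converts this to the boundary integral
$\tfrac{1}{2}\int_{\partial\Omega}|\vec{u}|^2\,\vec{u}\cdot\hat{\vec{n}}\,dS$,
which vanishes by the first part of (\ref{eq theorem}). The pressure term, similarly, integrates by parts to $\int_{\partial\Omega} p\,\vec{u}\cdot\hat{\vec{n}}\,dS=0$ using (\ref{eq free-divergence}) and the same boundary condition. These two steps are routine.

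The key step, and the only one that uses the second boundary condition in (\ref{eq theorem}), is the viscous term. Here I would rewrite $\Delta\vec{u}=-\vec{\nabla}\times(\vec{\nabla}\times\vec{u})$ (valid by (\ref{eq free-divergence})), and apply the identity
\begin{equation}
\vec{\nabla}\cdot\bigl(\vec{u}\times(\vec{\nabla}\times\vec{u})\bigr)=|\vec{\nabla}\times\vec{u}|^{2}-\vec{u}\cdot\bigl[\vec{\nabla}\times(\vec{\nabla}\times\vec{u})\bigr]\,.
\end{equation}
The divergence theorem then produces a bulk term $-\nu\int_{\Omega}|\vec{\nabla}\times\vec{u}|^{2}\,d^{3}x$ together with a boundary flux $\nu\int_{\partial\Omega}\bigl(\vec{u}\times(\vec{\nabla}\times\vec{u})\bigr)\cdot\hat{\vec{n}}\,dS$. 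Using the cyclic identity $(\vec{a}\times\vec{b})\cdot\vec{c}=\vec{b}\cdot(\vec{c}\times\vec{a})$, the integrand of this flux equals $-(\vec{\nabla}\times\vec{u})\cdot(\vec{u}\times\hat{\vec{n}})$, which is precisely the quantity assumed to vanish in (\ref{eq theorem}). This cancellation is the step where I expect any subtlety to live, since it is the only point where the rather unusual boundary condition $(\vec{\nabla}\times\vec{u})\cdot(\vec{u}\times\hat{\vec{n}})=0$ is actually used; one must be attentive to sign conventions in the triple product.

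Collecting terms and integrating in $\tau$ from $0$ to $t$ yields
\begin{equation}
\tfrac{1}{2}\int_{\Omega}\vec{u}(t)\cdot\vec{u}(t)\,d^{3}x-\tfrac{1}{2}\int_{\Omega}\vec{u}(0)\cdot\vec{u}(0)\,d^{3}x=-\nu\int_{0}^{t}\!\!\int_{\Omega}(\vec{\nabla}\times\vec{u})\cdot(\vec{\nabla}\times\vec{u})\,d^{3}x\,d\tau.
\end{equation}
Since $\nu>0$ and the initial kinetic energy is finite (the solution lies in $C^{\infty}$ and, implicitly from (\ref{eq bounded energy}), has finite energy at $t=0$), this identity is a linear relation between the two nonnegative quantities appearing in (\ref{eq theorem 2}); finiteness of either one forces finiteness of the other, giving the desired equivalence.
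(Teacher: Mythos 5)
Your proof is correct, but it reaches the energy identity by a genuinely different and more direct route than the paper. The paper first rewrites the momentum equation in Lamb form with the total pressure $p+\tfrac{1}{2}\vec{u}\cdot\vec{u}$, dots with $\vec{u}$, takes the divergence of the momentum equation to trade $u_j\partial_i\partial_i u_j$ for $\partial_i\partial_i\bigl(p+\tfrac{1}{2}u_ju_j\bigr)$, and extracts the dissipation $-\nu\left|\vec{\nabla}\times\vec{u}\right|^{2}$ through the Levi-Civita contraction $\epsilon_{ijk}\epsilon_{klm}=\delta_{il}\delta_{jm}-\delta_{im}\delta_{jl}$; it is then left with the surface flux $\nu\int_{\partial\Omega}\hat{\vec{n}}\cdot\vec{\nabla}\bigl(p+\tfrac{1}{2}\vec{u}\cdot\vec{u}\bigr)\,d^2x$, which it kills by integrating the Lamb-form equation over the closed boundary and invoking the condition $(\vec{\nabla}\times\vec{u})\cdot(\vec{u}\times\hat{\vec{n}})=0$. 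You instead perform the classical energy estimate and localize the role of that boundary condition in the viscous term alone, via $\Delta\vec{u}=-\vec{\nabla}\times(\vec{\nabla}\times\vec{u})$ and $\vec{\nabla}\cdot\bigl(\vec{u}\times(\vec{\nabla}\times\vec{u})\bigr)=\left|\vec{\nabla}\times\vec{u}\right|^{2}-\vec{u}\cdot\bigl[\vec{\nabla}\times(\vec{\nabla}\times\vec{u})\bigr]$, so the condition appears directly as the vanishing of the flux of $\vec{u}\times(\vec{\nabla}\times\vec{u})$; your sign bookkeeping in the triple product is right. Your route buys two things: it avoids the paper's detour through the Poisson-type relation for the total pressure (whose boundary treatment in the paper is only an integral identity over $\partial\Omega$, not a pointwise one), and it correctly retains the initial-energy term $\tfrac{1}{2}\int_{\Omega}\vec{u}(0)\cdot\vec{u}(0)\,d^3x$, which the paper's final identity (\ref{eq proof 17}) omits --- as written there, a nonnegative quantity is equated to a nonpositive one. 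The only caveat is the one you already flag: the equivalence in (\ref{eq theorem 2}) requires finiteness of the energy at $t=0$, which the theorem statement does not explicitly assume, so you should state it as a hypothesis rather than import it "implicitly" from (\ref{eq bounded energy}).
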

\begin{proof}
We know that the pressure $p$ is the average kinetic energy of the molecules of the fluid that are colliding against the walls in a stagnant fluid. When the fluid is on stream, a new contribution due to kinetic energy of the flow takes part. So, we can add kinetic energy to pressure, subtracting
\begin{equation}
\partial_{i}\left(\frac{1}{2}\rho u_j u_j \right), \rho=1,
\label{eq proof 1}
\end{equation} 
in right and left side of (\ref{eq incompressible NS}). Then the equations (\ref{eq incompressible NS}) can be rewritten like
\begin{equation}
\partial_{t}u_i + u_j \partial_j u_i-u_j \partial_i u_j = \nu \partial_j\partial_j u_i -\partial_i \left(p+\frac{1}{2} u_j u_j\right).
\label{eq proof 2}
\end{equation}
Taking the scalar product of (\ref{eq proof 2}) with $\vec{u}$, we have
\begin{equation}
u_i \partial_{t}u_i + u_i u_j \partial_j u_i-u_i u_j \partial_i u_j = \nu u_i \partial_j\partial_j u_i -u_i \partial_i \left(p+\frac{1}{2} u_j u_j\right);
\label{eq proof 3}
\end{equation}
and taking the laplacian of both sides of (\ref{eq proof 2}), we have
\begin{equation}
\partial_{t}\partial_{i} u_i + \partial_i u_j\left( \partial_j u_i- \partial_i u_j\right)+u_j\partial_j\partial_i u_i-u_j\partial_i\partial_i u_j = \nu \partial_j\partial_j \partial_i u_i -\partial_i\partial_i \left(p+\frac{1}{2} u_j u_j\right).
\label{eq proof 4}
\end{equation}
Given that the fluid is free-divergence as (\ref{eq free-divergence}) says, the first and third terms in the left side of (\ref{eq proof 4}) and first term in right side of (\ref{eq proof 4}) vanishes, while the fourth term in left side of (\ref{eq proof 4}) is the first term in right side of (\ref{eq proof 3}). So substituting (\ref{eq proof 4}) in (\ref{eq proof 3}), we have
\begin{equation}
u_i \partial_{t} u_i = \nu \partial_i u_j\left(\partial_j u_i-\partial_i u_j\right)\\ +\nu \partial_i\partial_i \left(p+\frac{1}{2} u_j u_j\right)-u_i\partial_i \left(p+\frac{1}{2} u_j u_j\right).
\label{eq proof 5}
\end{equation}
But Levi-Civita tensor obeys $\epsilon_{ijk} \epsilon_{klm} =\delta_{il}\delta_{jm}-\delta_{im}\delta_{jl}$, so we can rewrite (\ref{eq proof 5}) like
\begin{equation}
u_i \partial_{t}u_i  = -\nu \epsilon_{ijk} \partial_i u_j \epsilon_{klm} \partial_l u_m+
\nu  \partial_i \partial_i \left(p+\frac{1}{2} u_j u_j\right) -\partial_i\left[u_i\left(p+\frac{1}{2} u_j u_j\right)\right].
\label{eq proof 6}
\end{equation}
Changing this equation to vectorial notation, it results: 
\begin{equation}
\partial_{t}\left(\frac{1}{2} \vec{u}\cdot \vec{u}\right) = -\nu \left(\vec{\nabla}\times\vec{u}\right)\cdot\left(\vec{\nabla}\times\vec{u}\right)+
\nu  \Delta \left(p+\frac{1}{2} \vec{u}\cdot \vec{u}\right) -\vec{\nabla}\cdot\left[ \vec{u}\left(p+\frac{1}{2} \vec{u}\cdot\vec{u}\right)\right].
\label{eq proof 7}
\end{equation}
If we integrate in the space region $\Omega$ and time interval $\left[0,t\right)$, then
%\begin{equation}
\begin{eqnarray}
\int_{\Omega}\left(\frac{1}{2} \vec{u}\cdot \vec{u}\right)d^{3}x =
-\nu \int^{t}_{0}\int_{\Omega}\left(\vec{\nabla}\times\vec{u}\right)\cdot\left(\vec{\nabla}\times\vec{u}\right)d^3x d\tau \nonumber\\
+\nu\int^{t}_{0}\int_{\partial\Omega}\vec{\nabla}\left(p+\frac{1}{2} \vec{u}\cdot \vec{u}\right)\cdot\vec{\hat{n}}\,d^2xd\tau  -\int^{t}_{0}\int_{\partial\Omega}\left(p+\frac{1}{2} \vec{u}\cdot\vec{u}\right)\vec{u}\cdot\vec{\hat{n}}\,d^2xd\tau.\nonumber\\
\label{eq proof 8}
\end{eqnarray}
We have used the divergence theorem in the last two terms of right side of (\ref{eq proof 8}), where $\vec{\hat{n}}$ is the vector normal to the surface $\partial\Omega$. Choosing the Neumann boundary condition
\begin{equation}
\vec{\hat{n}}\cdot\vec{u}=0 \,, \forall\vec{x}\in\partial\Omega,
\label{eq proof 9}
\end{equation}
the last term of (\ref{eq proof 8}) vanishes. However, the second integral of right side of (\ref{eq proof 8}) vanishes if 
\begin{equation}
\vec{\hat{n}}\cdot\vec{\nabla}\left(p+\frac{1}{2} \vec{u}\cdot \vec{u}\right) =0 \,, \forall\vec{x}\in\partial\Omega.
\label{eq proof 10}
\end{equation}
We can obtain (\ref{eq proof 10}) with the next considerations.
Taking into account (\ref{eq proof 2}) and the identity
\begin{equation}
\Delta\vec{u}=\vec{\nabla}\left(\vec{\nabla}\cdot\vec{u}\right)-\vec{\nabla}\times\vec{\nabla}\times\vec{u},
\label{eq proof 11}
\end{equation}
we have
\begin{equation}
\partial_{t}\vec{u}-\vec{u}\times\left(\vec{\nabla}\times\vec{u}\right)=-\nu\vec{\nabla}\times\left(\vec{\nabla}\times\vec{u}\right)-\vec{\nabla}\left(p+\frac{1}{2} \vec{u}\cdot \vec{u}\right).
\label{eq proof 12}
\end{equation}
Since the surface $\partial\Omega$ is closed and don't change its shape with time (that is to say $\partial_{t}\vec{\hat{n}}=0$), integrating (\ref{eq proof 12}) in $\partial\Omega$ results
\begin{equation}
\int_{\partial\Omega}\left(\vec{\nabla}\times\vec{u}\right)\cdot\left(\vec{\hat{n}}\times\vec{u}\right)\,d^2x=\nu\int_{\partial\Omega}\vec{\hat{n}}\cdot\vec{\nabla}\left(p+\frac{1}{2} \vec{u}\cdot \vec{u}\right)\,d^2x,
\label{eq proof 13}
\end{equation}
where we have used
\begin{equation}
\partial_{t}\left(\vec{\hat{n}}\cdot\vec{u}\right)=0 \,, \forall\vec{x}\in\partial\Omega,
\label{eq proof 14}
\end{equation}
and the divergence theorem
\begin{equation}
\oint_{\partial\Omega}\left[\vec{\nabla}\times\left(\vec{\nabla}\times\vec{u}\right)\right]\cdot\vec{\hat{n}}\,d^2x=\int_{\Omega}\vec{\nabla}\cdot\left[\vec{\nabla}\times\left(\vec{\nabla}\times\vec{u}\right)\right]\,d^3x=0
\label{eq proof 15}
\end{equation}
So, (\ref{eq proof 10}) is equivalent to
\begin{equation}
\left(\vec{\nabla}\times\vec{u}\right)\cdot\left(\vec{\hat{n}}\times\vec{u}\right)=0 \,, \forall\vec{x}\in\partial\Omega.
\label{eq proof 16}
\end{equation}  
Taking the boundary condition (\ref{eq theorem}), the surface integrals of (\ref{eq proof 8}) vanishes. As a result,
\begin{equation}
\frac{1}{2}\int_{\Omega}\vec{u}\left(\vec{x},t\right)\cdot\vec{u}\left(\vec{x},t\right)d^3 x=-\nu \int^{t}_{0}\int_{\Omega}\left(\vec{\nabla}\times\vec{u}\left(\vec{x},\tau\right)\right)\cdot\left(\vec{\nabla}\times\vec{u}\left(\vec{x},\tau\right)\right)d^3 x d\tau,
\label{eq proof 17}
\end{equation}
and the proof of the theorem follows \qed.
\end{proof}
Three consequences arise here choosing boundary coditions (\ref{eq theorem}) for velocity and vorticity. First of all, relation (\ref{eq proof 17}) means that kinetic energy in the volume diminishes with time, if vorticity is not null. This is perfectly reasonable since kinematic viscosity produces heat from kinetic energy. Second, if the fluid is an Euler fluid, i.e. $\nu=0$, the fluid don't disipate kinetic energy and the kinetic energy don't depends explicitly on time. And third, vorticity controls the blow-up in time.\\
Boundary coditions (\ref{eq theorem}) are 
less restrictive than (\ref{eq vorticity b-c}), because the firsts are space first derivative and the seconds are higher order derivatives.
The theorem proposed is a similar result to Beale-Kato-Majda theorem, with Navier-Stokes fluid instead of Euler fluid. But there is not a proof that (\ref{eq theorem 2}) be a more general result of Beale-Kato-Majda theorem in $t\in\left[0,\infty\right)$, because right hand side of (\ref{eq proof 10}) vanishes in Euler fluid and because (\ref{eq theorem}) condition has not sense in Euler fluids. This theorem is more powerful than the Leray weak solutions method in the temporal domain, but more weak in the space domain due to boundary conditions.

% For one-column wide figures use
%%\begin{figure}% Use the relevant command for your figure-insertion program
% to insert the figure file.
% For example, with the option graphics use
%%\centering
%%\resizebox{0.75\textwidth}{!}{%
%%  \includegraphics{leer.eps}
%%}
% If not, use
%\vspace{5cm}       % Give the correct figure height in cm
%%\caption{Please write your figure caption here}
%%\label{fig:1}       % Give a unique label
%%\end{figure}
%
% For tables use
%%\begin{table}
%%\caption{Please write your table caption here}
%%\label{tab:1}       % Give a unique label
% For LaTeX tables use
%%\centering
%%\begin{tabular}{lll}
%%\hline\noalign{\smallskip}
%%first & second & third  \\
%%\noalign{\smallskip}\hline\noalign{\smallskip}
%%number & number & number \\
%%number & number & number \\
%%\noalign{\smallskip}\hline
%%\end{tabular}
% Or use
%\vspace*{5cm}  % with the correct table height
%%\end{table}
%
% BibTeX users please use
% \bibliographystyle{}
% \bibliography{}
%
% Non-BibTeX users please use

\end{document}